\newtheorem{theorem}{Theorem}
\newtheorem{prop}{Proposition}
\numberwithin{equation}{section}
\newcommand{\newc}{\newcommand}
\newc{\ra}{\rightarrow}
\newc{\lra}{\leftrightarrow}
\newc{\be}{\begin{equation}}
\newc{\ee}{\end{equation}}
\newc{\bg}{\begin{gathered}}
\newc{\eg}{\end{gathered}}
\newc{\bs}{\begin{split}}
\newc{\es}{\end{split}}
\newc{\ba}{\begin{eqnarray}}
\newc{\ea}{\end{eqnarray}}
\newc{\ov}{\overline}
\newc{\pa}{\partial}
\newc{\D}{\Delta}
\begin{document}
\begin{titlepage}
\begin{flushright}
v.\today
\end{flushright}
\vspace{0.5cm}
\begin{center}
\begin{Large}
{\bf

Exponential mixing of all orders\\  for  Arnol'd cat map lattices 

}

\end{Large}

\hskip1.0truecm

Minos Axenides$^{(a)}$\footnote{E-Mail: axenides@inp.demokritos.gr}, Emmanuel Floratos$^{(a,b,c)}$\footnote{E-Mail: mflorato@phys.uoa.gr} and Stam Nicolis$^{(d)}$\footnote{E-Mail: stamatios.nicolis@univ-tours.fr}

\hskip1.0truecm
{\sl 
${ }^{(a)}$Institute for Nuclear and Particle Physics, NCSR ``Demokritos''\\Aghia Paraskevi, Greece\\
${ }^{(b)}$Physics Department, University of Athens\\ Athens, Greece\\
${ }^{(c)}$Research Office of Mathematical Physics and Quantum Information,\\Academy of Athens, Division of Natural Sciences, Athens 10679, Greece\\
${ }^{(d)}$Institut Denis Poisson, Université de Tours, Université d'Orléans, CNRS (UMR7013)\\
Parc Grandmont, 37200 Tours, France

}
\end{center}
\begin{abstract}
Arnol'd cat maps can describe accelerated probes of the near horizon geometry of extremal black hole spacetimes; and coupled Arnol'd cat maps can describe multiparticle probes, as well as provide a framework for the near horizon geometry itself, when the black hole microstates can be resolved. Coupled Arnol'd cat maps define lattice field theories that have the property of being intrinsically chaotic, therefore can capture salient properties of information processing by black holes. One such property is that of scrambling, which, in the classical limit, becomes mixing. So it is of interest to compute the mixing times of the corresponding field theories.
In this paper we show that the recently introduced classical Arnol'd cat map lattice field theories  are  exponentially mixing  to all orders. Their  mixing times are well-defined and  are  expressed in terms of the Lyapunov exponents, more precisely by the  combination that defines the inverse  of the  Kolmogorov-Sinai entropy of these systems. 
We prove by an explicit recursive construction of  the correlation functions, that these exhibit $l-$fold mixing for any $l= 3,4,5,\ldots$.
This computation is  relevant for Rokhlin's conjecture,  which states that  2-fold mixing induces $l-$fold mixing for any $l>2$.  Our results show  that  2-fold exponential mixing, while being necessary for any $l-$fold mixing to hold,  is nevertheless not sufficient for Arnol'd cat map lattice field theories. That $l-$fold mixing does hold is because an additional contribution to the correlation function can be shown to vanish in the long time limit.

\end{abstract}
\end{titlepage}
\tableofcontents
\section{Introduction and Motivation}\label{intro}
Controversial issues of locality and unitarity between quantum theory and black hole physics have given rise to new  concepts and principles in quantum gravity such as those of the holographic and complementarity principles and AdS/CFT correspondence, with 
black holes~\cite{Bousso:2022ntt} having been conjectured to be the fastest information scramblers in nature~\cite{Hayden:2007cs,Sekino:2008he}. 
In essence they highlight  that the no-cloning theorem of quantum information imposes constraints on the speed of information spreading on the black hole horizon
 due to the interaction of infalling matter with the  horizon degrees of freedom. Its characteristic timescale  has been estimated to be proportional to the logarithm of the entropy of the black hole. It is called the scrambling time, a conjectured lower bound to the time of information propagation in the universe, with the search for a microscopic many-body system that can saturate it being an area of topical research. 

 More specifically the ingredients of regular local field theories have been shown to be inadequate to accommodate fast information processing among the near horizon black hole microscopic degrees of freedom; this has led to the study of how nonlocality can be taken into account~\cite{Hayden:2007cs,Almheiri:2012rt,harlow2022tf1}. A further property, that  has 
 emerged as relevant is that of chaoticity: 
 
 The large$-N$ limit, that underlies the holographic correspondence, implies that the near horizon degrees of freedom, that are macroscopic in number, display molecular chaos, in the sense of Boltzmann--but, now, the fluctuations are due to quantum, rather than thermal, effects, as has been highlighted by the so-called Maldacena--Shenker-Stanford bound on the greatest Lyapunov exponent~\cite{maldacena2016bound}.
This has, in turn, motivated the study of how chaos appears in matrix models~\cite{Iizuka:2008eb}, conformal field theories 
and a large class of SYK type of models~\cite{PhysRevD.94.106002} which have been  dubbed chaotic field theories~\cite{Cvitanovic:2000kj}. 
 SYK models have been shown to provide the appropriate holographic dual description to Jackiw-Teitelboim gravity~\cite{Trunin:2020vwy,Mertens:2022irh}, though much remains to be understood about their properties.

What has received much less attention is the fact that while  black hole spacetimes, when probed by classical matter, have infinite entropy, they have {\em finite} entropy, when probed by quantum matter. Furthermore, that the fundamental problem for understanding quantum gravity is that classical gravity is  a gauge theory with a {\em noncompact} gauge group--that of diffeomorphisms; it is the noncompactness of the gauge group that leads to the inevitability of the appearance of spacetime singularities that a quantum description must resolve. How this can be achieved is not, yet, known. For the moment the only hints we have for quantum features of gravity are that superstring theory does possess the degrees of freedom that can account for the finite entropy of extremal (and near-extremal) black holes,  whose singularities are time-like and, thus, avoidable, when these are probed by quantum matter; we do not control, however, the corresponding properties of non-extremal black holes, whose singularities are space-like. 
Nor is the dynamics of the degrees of freedom, themselves, that account for the entropy of extremal black holes known in full detail. 

It is in this context that our contribution to the subject  lies.

On the one hand, we have leveraged the fact that extremal black holes have  finite entropy--and that the entropy does not change with time, since extremal black holes do not emit Hawkinhg radiation--to propose a so-called ``modular discretization''~\cite{Axenides:2013iwa} of the near horizon geometry of such black holes, when the black hole microstates can be resolved. This discretization provides a compactification of the gauge group. We have shown that this discretization can realize the AdS$_2$/CFT$_1$ correspondence as a duality for the single-particle probes of the near-horizon geometry. In ref.~\cite{Axenides:2015aha,Axenides:2016nmf} we showed that this discretization passes some quite non-trivial consistency checks, namely that the dynamics of the single-particle evolution operators satisfies the assumptions of the Eigenstate Thermalization Hypothesis, that has emerged as a key property of the dynamics of the microstates--and their probes. We have also shown that the dynamics is chaotic in a way consistent with the saturation of the ``scrambling time bound''~\cite{Hayden:2007cs,Sekino:2008he,Barbon:2013goa}; though for single-particle probes this time is, in fact,  the mixing time. 

Furthermore, we have shown that our discretization passes another non-trivial consistency check, namely that it does allow the recovery of the smooth geometry of AdS$_2$ in an appropriate scaling limit~\cite{Axenides:2019lea,Axenides:2022cwy}. 

The next step involves using this discretization for constructing, either multiparticle probes of the near-horizon geometry, or for describing the near-horizon geometry itself, in order to then understand its possible dynamics. 

This means starting by constructing lattice field theories--as a prelude to studying their scaling limits.  In this context the work of refs.~\cite{Basteiro:2022zur,Basteiro:2022pyp,erdmenger2024discrete} comes closest to our approach, which was pursued in ref.~\cite{axenides2023arnol}. In that paper we describe how to obtain the consistent equations of motion for coupled Arnol'd cat maps. One reason this is of interest is that it represents the first construction of a many-body system that does not possess an integrable limit, since each  Arnol'd cat map is a chaotic system. Another reason is that everything is under analytical control and the system is invariant under symplectic transformations (this is a main difference with the work pertaining to coupled map lattices~\cite{bunimovich2005coupled,bunimovich1988spacetime,gutkin2021linear}, that do not focus on the symmetries of the models).
The reason the Arnol'd cat map is of interest is because, on the one hand, it does represent a consistent observer of the near-horizon geometry of an extremal black hole, in particular an accelerating observer, on the other hand it captures expected  chaotic properties of the near-horizon dynamics. 

 In the present work  we pursue  the study of the chaotic properties of the many-body system, by elucidating its mixing properties, the classical avatars of scrambling.  We will show that it exhibits fast exponential mixing of all orders in the number of observables. 
To this end we adapt  an operator method developed by de Bièvre~\cite{de1995chaos}, firstly for the case of two observable functions in the computation of the decay rates of their correlation function. Moreover we generalize his method for any number of observables for the Arnol'd cat map lattice, deducing  that it is a strongly $l-$fold mixing system, for any $l=2,3,\ldots$ and that the mixng time is $1/S_\mathrm{K-S},$ where $S_\mathrm{K-S}$ is the Kolmogorov-Sinai entropy.  

The role of deterministic chaos~\cite{zaslavsky1985chaos} in rendering mixing phenomena fast and ``efficient''  is well recognized and can be understood within  the so-called ergodic hierarchy classification of hyperbolic dynamical systems~\cite{berkovitz2006ergodic}. Mixing is the classical analog of the fast scrambling of information, which is the rapid spreading--in phase space--of an initially  localized perturbation, as it evolves towards a homogeneous stationary state.  
 
Our results pertain, also,  to the well known Rokhlin conjecture~\cite{rokhlin1949endomorphisms,kosenko2019rokhlin,ryzhikov2024multiplemixing75years} of whether 2-fold mixing implies  multiple $l-$fold mixing and more generally mixing of all higher orders. We show, by explicit calculation, in our system,  that 2-fold mixing is necessary but not sufficient for any $l-$fold ($l>2$) mixing to hold. 
We proceed, to this end,  in steps, by establishing, first, both  the 2-fold and $l-$fold  mixing property of a single ACM. Subsequently we generalize this result  to   the case of the extended $n-$body coupled Arnold cat map systems through the explicit computation of the  $l-$fold correlation functions, for $l=2$ and higher.  
We find, by explicit calculation, that the correlation functions show exponential decay behavior and identify the mixing time as  the inverse of the Kolmogorov-Sinai entropy. The reason that 2-fold mixing isn't sufficient for ensuring $l-$fold mixing can be understood from the recursive construction of the $l+1$st correlation function,  which is the sum of two terms, the first being the $l-$th correlation function, while the second is a remainder term that can be shown, also, to tend to zero for large times--with subleading behavior.  

The plan of the paper is the following:

In section~\ref{symplfibseq} we review the salient features of our previous paper~\cite{axenides2023arnol}, namely the construction of the evolution operator for $n$ maps, the analytical computation of the spectrum of the Lyapunov exponents, from which we have obtained  the Kolmogorov--Sinai entropy, $S_\mathrm{K-S}.$

 In sections ~\ref{ACM1mix} ~\ref{multiplemix} we discuss the definition of  simple and multiple mixing  of chaotic dynamical systems and we provide an explicit calculation of the corresponding mixing times for the case of a single Arnol'd cat map. 

In section~\ref{CACM} we  extend the previous calculation to the case of $n,$ symplectically interacting, Arnol'd cat maps, for any $n.$ We find that the corresponding, simple, mixing time is equal to $1/S_\mathrm{K-S}.$

Our conclusions and discussion for directions of further inquiry are set forth in section~\ref{concl}.

\section{Interacting Arnol'd cat maps from symplectic  couplings of $n$, $k-$Fibonacci sequences}\label{symplfibseq} 
In this section we review the salient features of our previous paper~\cite{axenides2023arnol}, pertaining to  the construction of the evolution operator for $n$ interacting Arnol'd cat maps and the analytical computation of the spectrum of the Lyapunov exponents, from which we have obtained  the Kolmogorov--Sinai entropy, $S_\mathrm{K-S}.$

We consider a dynamical system of $n$ degrees of freedom, whose space of states is the torus $\mathbb{T}^{2n},$ the $2n-$dimensional torus of radius 1. Therefore any state can be identified with a point $\bm{x}\in\mathbb{T}^{2n}.$

We shall describe the time evolution of this system by successive applications on a point $\bm{x}\in\mathbb{T}^{2n}$ of elements, ${\sf M},$  of the symplectic group over the integers, 
Sp$_{2n}[\mathbb{Z}].$  The evolution is, thus, discrete in time. In equations,
\begin{equation}
\label{Mevol0}
\bm{x}_{m+1}=\bm{x}_m\,{\sf M}\,\mathrm{mod}\,1
\end{equation}
where $\bm{x}_m=(\bm{q}_m,\bm{p}_m)$ describes the state of the system at the $m-$th time tick. Here the positions and momenta take values in $\mathbb{T}^n.$
These equations describe the evolution of $n$ ``particles'', from the tick $m$ to the tick $m+1.$

The symplectic group, $\mathrm{Sp}_{2n}[\mathbb{Z}],$ is defined as the set of integer-valued matrices ${\sf M}$ that satisfy the relation
\begin{equation}
\label{sympltructM}
{\sf M}^{\mathrm{T}}{\sf J}{\sf M}={\sf J}
\end{equation}
where ${\sf J}$ is given by 
\begin{equation}
\label{symplstructJ}
{\sf J}=\left(\begin{array}{cc} 0 & -I\\ I & 0\end{array}\right)
\end{equation} 
Eq.~(\ref{sympltructM}) implies that, for any two vectors, $\bm{x}$ and $\bm{y}\in\mathbb{T}^{2n},$ their {\em symplectic product}
\begin{equation}
\label{symplprod}
\langle\bm{x},\bm{y}\rangle\equiv \bm{x}^\mathrm{T}{\sf J}\bm{y}
\end{equation} 
is preserved by the group action, viz.
\begin{equation}
\label{symplprodMevol}
\langle{\sf M}\bm{x},{\sf M}\bm{y}\rangle = \langle\bm{x},\bm{y}\rangle
\end{equation}
Upon decomposing ${\sf M}$ into four blocks of $n\times n$ matrices 
\begin{equation}
\label{Mblocks}
{\sf M}=\left( 
\begin{array}{cc} 
{\sf A} & {\sf B} \\ {\sf C} & {\sf D}
\end{array}
\right)
\end{equation}
we obtain the constraints on the blocks,
\begin{equation}
\label{symplconstraints}
\begin{array}{l}
{\sf A}^\mathrm{T}{\sf D}-{\sf C}^\mathrm{T}{\sf B} = I_{n\times n}\\
{\sf A}^\mathrm{T}{\sf C}={\sf C}^\mathrm{T}{\sf A}\\
{\sf B}^\mathrm{T}{\sf D}={\sf D}^\mathrm{T}{\sf B}
\end{array}
\end{equation}
It is known that Sp$_{2n}[\mathbb{Z}]$ is generated by the elements
\begin{equation}
\label{Spgen}
\begin{array}{ccc}
\displaystyle
{\sf M}_{\mathrm{R}}({\sf R})=\left(\begin{array}{cc} I & {\sf R}\\ 0 & I\end{array}\right), & 
\displaystyle
{\sf M}_{\mathrm{L}}({\sf L})=\left(\begin{array}{cc} I & 0\\ {\sf L} & I\end{array}\right) & \mathrm{and}\hskip0.3truecm
\displaystyle
{\sf D}_{\mathrm{S}}=\left(\begin{array}{cc} {\sf S}^{\mathrm{T}}& 0\\ 0 & {\sf S}^{-1}\end{array}\right)
\end{array}
\end{equation}
where ${\sf R}$ and ${\sf L}$ are symmetric matrices; all matrices have integer entries. 

In what follows we are interested in those evolution matrices, ${\sf M},$ that have strictly positive eigenvalues--the reason is that this property is sufficient for the dynamics to be chaotic. Such symplectic matrices are called {\em hyperbolic} and the corresponding dynamical systems are called {\em hyperbolic} or {\em Anosov}. 

It can be shown that the symplectic property implies that the eigenvalues come in pairs, $(\rho_+^{(i)},\rho_-^{(i)}=1/\rho_+^{(i)}<1).$ For hyperbolic matrices this implies the existence of $n$ planes, spanned by the corresponding eigenvectors: The dynamics is {\em expanding} along the eigenvectors, $\bm{v}_+^{(i)},$ corresponding to the eigenvalues  $\rho_+^{(i)}$  and {\em contracting} along the eigenvectors, $\bm{v}_-^{(i)},$ corresponding to the eigenvalues $\rho_-^{(i)}.$

We shall focus here on evolution matrices  ${\sf M}$ that are hyperbolic and symmetric, so the eigenvectors $\bm{v}_\pm^{(i)}$ span orthogonal eigenspaces and 
$\bm{v}_+^{(i)}$  and $\bm{v}_-^{(i)}$ are orthogonal for each $i=1,2,\ldots,n.$ 

This class of matrices includes, in particular, the generalization of the so-called Arnol'd cat map, defined by the $2\times 2$ matrix
\begin{equation}
\label{ACMevol}
{\sf M}=\left(\begin{array}{cc} 1 & 1 \\ 1 & 2\end{array}\right)
\end{equation}
to the case of $n$ such maps, in interaction. 

Moreover, since the phase space is compact, this class of maps exhibits strong mixing as well as $l-$fold mixing of {\em any} order, $l,$ as we shall show in the following.
To this end we shall use the Lyapunov exponents and the Kolmogorov--Sinai entropy, that were calculated in closed form in~\cite{axenides2023arnol}.

In order to study the effects of coupling of $n$ Arnol'd cat maps, each of which is  defined on a   lattice of $n$ sites, we associate to each site  a two--dimensional torus, with dynamics described by a single Arnol'd cat map.  The total phase space of the system will be $\mathbb{T}^{2n}$, the $2n-$dimensional torus and the proposed dynamics will be described by appropriate elements of the symplectic group, Sp$_{2n}[\mathbb{Z}]$. 
The $2n-$dimensional symplectic maps will allow couplings of various degrees of locality and strength. 

To begin with, we shall show that these maps can be constructed as iteration matrices of $n$ coupled Fibonacci sequences. 

We start from the relation between the ACM and the Fibonacci sequence of integers.

The Fibonacci sequence is one of the integer sequences, which has been studied, for a long time and there are journals dedicated to its properties and their applications. 

The definition is given by the relations
\begin{equation}
\label{fibinacci_seq}
\begin{array}{l}
f_0=0; f_1=1\\
f_{m+1}=f_m+f_{m-1}\\
\end{array}
\end{equation}
which can be written  in matrix form
\begin{equation}
\label{matrixfib}
\left(\begin{array}{c} f_m\\f_{m+1}\end{array}\right)=\underbrace{\left(\begin{array}{cc} 0 & 1 \\ 1 & 1\end{array}\right)}_{\sf A}\left(\begin{array}{c} f_{m-1} \\ f_m\end{array}\right)
\end{equation}
The matrix ${\sf A}$ is not a symplectic matrix, but it satisfies 
\begin{equation}
\label{nonsymplA}
{\sf A}^\mathrm{T}{\sf J}{\sf A}=-{\sf J}
\end{equation}
for $n=1$. 

We remark that the  Arnol'd cat map, acting on the torus $\mathbb{T}^2$, can be written as 
\be
\label{ArnoldCM}
\left(\begin{array}{cc} 1 & 1 \\ 1 & 2\end{array}\right) = {\sf A}^2
\ee
Eq.~(\ref{nonsymplA}) implies that ${[\sf A}^2]^\mathrm{T}{\sf J}{\sf A}^2={\sf J}$, therefore that ${\sf A}^2$ is symplectic. 
 It's possible to generalize the Fibonacci sequence in the  following way:
\begin{equation}
\label{kfibrec}
g_{m+1}=kg_m+g_{m-1}
\end{equation}
with $g_0=0$ and $g_1=1$ and $k$ is a positive integer.  This is known as the ``$k-$Fibonacci'' sequence~\cite{Horadam}. 

We may solve eq.~(\ref{kfibrec}) by $g_m\equiv C\rho^m.$ The characteristic equation for $\rho$  reads
\be
\label{kfibseq}
\rho^2-k\rho-1=0\Leftrightarrow\rho_\pm(k)=\frac{k\pm\sqrt{k^2+4}}{2}
\ee
and express $g_m$ as a linear combination of the $\rho_\pm$, upon taking into account the initial conditions:
\be 
\label{kfibsol}
g_m = A_+\rho_+(k)^m + A_-\rho_-(k)^m=\frac{\rho_+(k)^m-(-)^m\rho_+(k)^{-m}}{\sqrt{k^2+4}}
\ee
In matrix form
\be
\label{kfibinacci}
\left(\begin{array}{c} g_m\\ g_{m+1}\end{array}\right)=\underbrace{\left(\begin{array}{cc} 0 & 1\\ 1 & k\end{array}\right)}_{{\sf A}(k)}\left(\begin{array}{c} g_{m-1}\\g_m\end{array}\right)
\ee
Similarly as for the usual Fibonacci sequence, we may show, by induction, that 
\be
\label{Akn}
{\sf A}(k)^{m}=\left(\begin{array}{cc} g_{m-1} & g_m \\ g_m & g_{m+1}\end{array}\right)
\ee
We remark that $\mathrm{det}\,{\sf A}(k)^m=(-)^m$, for $m=1,2,\ldots$ and that $\lim_{m\to\infty}\,(g_{m+1}/g_m)=\rho_+(k)$, which, for $k=1$, is the golden ratio, for $k=2$ is the silver ratio and, for $k>2$ are generalizations thereof. 

The greatest   Lyapunov exponent of  the $k-$Arnol'd cat map, ${\sf A}(k)^2$, i.e.  $\lambda_+(k)\equiv \log\,\rho_+(k)^2$ is  an increasing function of $k$.

Coupled Fibonacci sequences have been considered in the literature, for instance in~\cite{rathore2012generalized}. However, in these papers the possible applications to Hamiltonian dynamics,  were not the topic of interest and moreover the corresponding maps were not symplectic.

After this review of the single Arnol'd cat map, we proceed to the study of how, many such maps, can interact, in a way such that the evolution matrix is an element of Sp$_{2n}[\mathbb{Z}].$ To this end, we shall, once more, use the correspondence between the Arnol'd cat maps and the Fibonacci sequences. 

We start by coupling $n=2$ Fibonacci sequences, $\{f_m\}$ and $\{g_m\}$  (but we write the expressions in a way that generalizes immediately to arbitrary $n$):
\begin{equation}
\label{Fibonacci}
\begin{array}{l}
\displaystyle
f_{m+1}=a_1f_m+b_1f_{m-1}+c_1g_m+d_1g_{m-1}\\
\displaystyle
g_{m+1}=a_2g_m+b_2g_{m-1}+c_2f_m+d_2f_{m-1}\\
\end{array}
\end{equation}
where the $a_i,b_i,c_i,d_i,i=1,2$ are integers,  $f_0=0=g_0$ and $f_1=1=g_1$ are the initial conditions and $m=1,2,3,\ldots$. 
 
In matrix form, these read 
\begin{equation}
\label{2Fibiter}
X_{m+1}\equiv
\left(\begin{array}{l} f_m\\g_m\\f_{m+1}\\g_{m+1}\end{array}\right)=
\left(\begin{array}{cccc}
0 & 0 & 1 &0\\
0 & 0 & 0 &1\\
b_1 & d_1 & a_1 & c_1 \\
d_2 & b_2 & c_2 & a_2
\end{array}
\right)
\underbrace{
\left(\begin{array}{l} f_{m-1}\\g_{m-1}\\f_{m}\\g_{m}\end{array}\right)}_{X_m}
\end{equation}
Let us define the 2$\times$2 matrices
\begin{equation}
\label{blocks}
\begin{array}{ccc}
{\sf D}\equiv\left(\begin{array}{cc} b_1 & d_1 \\ d_2 & b_2 \end{array}\right) & & 
{\sf C}\equiv\left(\begin{array}{cc} a_1 & c_1 \\ c_2 & a_2\end{array}\right)
\end{array}
\end{equation}
in terms of which the one--time--step evolution equation~(\ref{2Fibiter})  can be written in block form as 
\begin{equation}
\label{block1}
X_{m+1}=
\left(\begin{array}{cc} 0_{n\times n} & I_{n\times n} \\ {\sf D} & {\sf C}\end{array}\right) X_m
\end{equation}
In analogy with the case of a single Fibonacci sequence and its relation with the Arnol'd cat map, we impose the constraint (cf. ~(\ref{nonsymplA})) 
\begin{equation}
\label{symplectic1}
\left(\begin{array}{cc} 0_{n\times n} & I_{n\times n} \\ {\sf D} & {\sf C}\end{array}\right)^\mathrm{T}
{\sf J}
\left(\begin{array}{cc} 0_{n\times n} & I_{n\times n} \\ {\sf D} & {\sf C}\end{array}\right) = -{\sf J}
\end{equation}
This condition implies that 
\begin{equation}
\label{solsAB}
\begin{array}{ccc}
{\sf D} = I_{n\times n} & & 
{\sf C} = {\sf C}^\mathrm{T}
\end{array}
\end{equation}
Therefore $a_1=k_1$, $a_2=k_2$, $c_1 = c_2 = c$. This implies, in particular, that the coupling between the sequences is the same for both, in order for the square of the evolution matrix to be symplectic. As we shall explain below, the generalization of this property is that, for more sequences, the corresponding coupling matrix must be symmetric. This is  important in order to define  chains of interacting $k-$Arnol'd cat maps. 

 In terms of these parameters, the recursion relations take the form
\begin{equation}
\label{New2Fib}
\begin{array}{l}
\displaystyle
f_{m+1}=k_1f_m+f_{m-1}+cg_m\\
\displaystyle
g_{m+1}=k_2g_m+g_{m-1}+cf_m\\
\end{array}
\end{equation}
and can be identified as describing a particular  coupling between a $k_1-$ and a $k_2-$Fibonacci sequence. This particular coupling is determined by the condition that the square of the evolution matrix is an element of Sp$_4[\mathbb{Z}]$:
\begin{equation}
\label{Msquareevol}
{\sf A}=\left(\begin{array}{cc} 
0 & 1 \\ 1 & {\sf C}
\end{array}\right)\Rightarrow
{\sf M}={\sf A}^2=
\left(\begin{array}{cc} 
1 & {\sf C} \\ {\sf C} & 1+{\sf C}^2
\end{array}\right)=\left(\begin{array}{cc} I & 0 \\ {\sf C} & I\end{array}\right)\left(\begin{array}{cc} I & {\sf C} \\ 0 & I\end{array}\right)
\end{equation}

The generalization to a chain of $n$ $k-$Fibonacci sequences, with nearest--neighbor couplings,  proceeds as follows:  We choose two diagonal matrices, of positive integers, ${\sf K}_{IJ}=K_I\delta_{IJ}$ and ${\sf G}_{IJ}=G_I\delta_{IJ}$, with $I,J=1,2,\ldots,n$. 

We now define the translation operator, ${\sf P}$ along a closed chain, by ${\sf P}_{I,J}=\delta_{I-1,J}\,\mathrm{mod}\,n$. The periodicity is expressed by the fact that 
${\sf P}^n=I_{n\times n}$. 
Moreover, ${\sf P}$ is orthogonal, since ${\sf P}{\sf P}^\mathrm{T}=I_{n\times n}$.  

Now we can define the coupling matrix for  $n$  sequences as
\begin{equation}
\label{evolAn}
{\sf C} = {\sf K} + {\sf P}{\sf G} + {\sf G}{\sf P}^\mathrm{T}
\end{equation}

The corresponding $2n\times 2n$ evolution matrix, ${\sf A}$ is given by 
\begin{equation}
\label{Aevol2n}
{\sf A} = \left(\begin{array}{cc} 
0_{n\times n} & I_{n\times n} \\ I_{n\times n} & {\sf C}
\end{array}\right)
\end{equation}
and satisfies the relation ${\sf A}^\mathrm{T}{\sf J}{\sf A}=-{\sf J}$.
Its square, 
\begin{equation}
\label{Mevol2n}
{\sf M}={\sf A}^2=\left(\begin{array}{cc} 
I_{n\times n} & {\sf C} \\ {\sf C} & I_{n\times n}+{\sf C}^2
\end{array}\right)
\end{equation}
therefore satisfies the relation  ${\sf M}^\mathrm{T}{\sf J}{\sf M}={\sf J}$, showing that 
${\sf M}\in\mathrm{Sp}_{2n}[\mathbb{Z}]$. Since ${\sf A}$ is symmetric, (from the property that ${\sf C}={\sf C}^\mathrm{T}$), ${\sf M}$ is positive definite and its eigenvalues come in pairs, $(\rho,1/\rho)$, with $\rho > 1$ (and the corresponding eigenvectors are orthogonal).  This property implies that, for all matrices ${\sf K}$ and ${\sf G}$ this system of coupled maps is hyperbolic.

An important special case arises if we impose  translation invariance along the chain, i.e.  $K_I=K$ and $G_I = G$ for all $I=1,2,\ldots,n$.

Let us now consider the case of the open chain. The only change involves the operator ${\sf P}$, which, now, must be defined as ${\sf P}_{IJ}=\delta_{I-1,J}$, for $I,J=1,2,\ldots,n$. Due to the absence of the mod $n$ operation, the ``far non--diagonal'' (upper right and lower left) elements are, now, zero. This express the property that the $n-$th Fibonacci is not coupled to the first one (and vice versa). 

For both, closed or open, chains, we observe certain algebraic properties of the evolution matrix, ${\sf A}$. 

The $k-$Fibonacci sequence has the important property that the elements of the matrix  ${\sf A}(k)^m$ are arranged in columns of consecutive pairs of the sequence. We shall show that this property can be generalized for $n$ interacting $k-$Fibonacci sequences as follows:

\begin{theorem}
The $m-$th power of the evolution matrix, ${\sf A}$ (cf. eq.~(\ref{Aevol2n})) can be written as 
\begin{equation}
\label{Amevol2n}
{\sf A}^m=\left(\begin{array}{cc} {\sf C}_{m-1} & {\sf C}_m\\ {\sf C}_m & {\sf C}_{m+1}\end{array}\right) 
\end{equation}
where ${\sf C}_0=0_{n\times n}$, ${\sf C}_1=I_{n\times n}$ and ${\sf C}_{m+1}={\sf C}{\sf C}_m+{\sf C}_{m-1}$, with $m=1,2,3,\ldots$.   
This matrix recursion relation generalizes to matrices the $k-$Fibonacci sequence for numbers. It  holds for any  matrix, ${\sf C}$ and, in particular for the (symmetric, integer) matrix ${\sf C}$, defined by eq.~(\ref{evolAn}). The solution to this matrix recursion relation is given in terms of the Fibonacci polynomials, ${\sf F}_m(x),$ with argument $x={\sf C},$ i.e. ${\sf C}_m = {\sf F}_m({\sf C}).$~\cite{axenides2023arnol}.
\end{theorem}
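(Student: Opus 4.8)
The plan is to prove the block formula~(\ref{Amevol2n}) by induction on $m$, the crux being a commutativity observation that forces the two off--diagonal blocks to coincide. First I would verify the base case $m=1$: using ${\sf C}_0=0_{n\times n}$, ${\sf C}_1=I_{n\times n}$ and ${\sf C}_2={\sf C}{\sf C}_1+{\sf C}_0={\sf C}$, the right--hand side of~(\ref{Amevol2n}) becomes
\begin{equation}
\left(\begin{array}{cc} {\sf C}_0 & {\sf C}_1 \\ {\sf C}_1 & {\sf C}_2\end{array}\right) = \left(\begin{array}{cc} 0_{n\times n} & I_{n\times n} \\ I_{n\times n} & {\sf C}\end{array}\right) = {\sf A},
\end{equation}
which agrees with the definition~(\ref{Aevol2n}).

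For the inductive step I would assume~(\ref{Amevol2n}) for some $m$ and compute
\begin{equation}
{\sf A}^{m+1} = {\sf A}^m {\sf A} = \left(\begin{array}{cc} {\sf C}_{m-1} & {\sf C}_m \\ {\sf C}_m & {\sf C}_{m+1}\end{array}\right)\left(\begin{array}{cc} 0 & I \\ I & {\sf C}\end{array}\right) = \left(\begin{array}{cc} {\sf C}_m & {\sf C}_{m-1} + {\sf C}_m{\sf C} \\ {\sf C}_{m+1} & {\sf C}_m + {\sf C}_{m+1}{\sf C}\end{array}\right).
\end{equation}
Comparing with the target form, whose blocks are ${\sf C}_m,\,{\sf C}_{m+1},\,{\sf C}_{m+1},\,{\sf C}_{m+2}$, the upper-- and lower--left entries match immediately, whereas the two remaining entries require ${\sf C}_{m-1}+{\sf C}_m{\sf C}={\sf C}_{m+1}$ and ${\sf C}_m+{\sf C}_{m+1}{\sf C}={\sf C}_{m+2}$. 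By the defining recursion ${\sf C}_{m+1}={\sf C}{\sf C}_m+{\sf C}_{m-1}$ these hold precisely when ${\sf C}_m{\sf C}={\sf C}{\sf C}_m$, i.e. when each ${\sf C}_m$ commutes with ${\sf C}$.

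I would close the argument by establishing this commutativity in a parallel induction: ${\sf C}_0=0$ and ${\sf C}_1=I$ commute with ${\sf C}$ trivially, and if ${\sf C}_{m-1}$ and ${\sf C}_m$ do, then so does ${\sf C}_{m+1}={\sf C}{\sf C}_m+{\sf C}_{m-1}$. Equivalently, the recursion with these initial data exhibits each ${\sf C}_m$ as a polynomial in ${\sf C}$ with integer coefficients, which therefore commutes with ${\sf C}$; matching the scalar recursion ${\sf F}_{m+1}(x)=x{\sf F}_m(x)+{\sf F}_{m-1}(x)$ with ${\sf F}_0(x)=0$, ${\sf F}_1(x)=1$ identifies this polynomial as the $m$-th Fibonacci polynomial, giving ${\sf C}_m={\sf F}_m({\sf C})$ as claimed. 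The only genuine subtlety is recognizing that commutativity is needed at all; once that is isolated, the block multiplication is routine and makes no use of the particular form~(\ref{evolAn}) of ${\sf C}$, so the statement indeed holds for an arbitrary matrix ${\sf C}$.
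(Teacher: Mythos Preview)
Your proof is correct and follows the same inductive approach as the paper. The only difference is that the paper multiplies on the left, writing ${\sf A}^{m+1}={\sf A}\cdot{\sf A}^m$, which produces the lower blocks ${\sf C}_{m-1}+{\sf C}{\sf C}_m$ and ${\sf C}_m+{\sf C}{\sf C}_{m+1}$ and hence matches the recursion ${\sf C}_{m+1}={\sf C}{\sf C}_m+{\sf C}_{m-1}$ immediately---so the commutativity sub-argument you needed (arising only because you multiplied on the right) never appears there, though your polynomial observation is of course what justifies the identification ${\sf C}_m={\sf F}_m({\sf C})$.
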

\begin{proof}
The proof is by induction. For $m=1$ it is true, by definition. If we assume it holds for $m>1$, then, by the relation ${\sf A}^{m+1}={\sf A}\cdot{\sf A}^m$, we immediately establish that it holds for $m+1$. 
\end{proof}
Having constructed a large class of symplectic many--body maps, that describe the dynamics of $n$ $k-$Arnol'd cat maps, we wish to understand their chaotic behavior, as they act on any initial condition of $\mathbb{T}^{2n},$ for $n>1.$ 

These particular symplectic matrices generalize the chaotic behavior of one Arnol'd cat map, acting on $\mathbb{T}^2,$ to that of $n$ maps, acting on $\mathbb{T}^{2n}.$ 
Indeed, starting with initial conditions $\bm{x}_0\in\mathbb{T}^{2n}\neq\bm{0},$ which have irrational components, the evolution matrix ${\sf M},$ as it acts on $\bm{x}_0,$ 
defines an orbit,$\bm{x}_m=\bm{x}_0{\sf M}^m,$ that will, in the limit $m\to\infty,$ cover the whole torus $\mathbb{T}^{2n};$ the full phase space is the attractor of the map. 
This means that the map is ergodic. 

Furthermore, this map has the, additional, property of being {\em strongly chaotic}, which means that it has positive Kolmogorov-Sinai entropy. It is possible to tune the parameters $K$ and $G$ (in the translation invariant case) so that no Lyapunov exponent is equal to  zero (which would imply the existence of a conservation law). In this cse the map is {\em maximally} hyperbolic, i.e. it is an ``Anosov C-system''. 

Another quantitative measure of the ``long time'' chaotic behavior of the orbits  of a hyperbolic map is provided by the properties of the time correlation functions of ``observables'', i.e. functions on the phase space, $\mathbb{T}^{2n}.$  These properties define the {\em mixing behavior} of the map and can be expressed by the fact that the connected correlation functions decay to zero, for long times. This, thus, raises the problem of computing them and, in particular whether they decay exponentially, thereby defining the spectrum of mixing times. 

On the other hand, another way to look at the chaotic properties of a dynamical system, is to consider, if possible, the full set of its unstable periodic orbits. For large periods, the corresponding  periodic orbit should approach (this is called ``shadowing'') the chaotic orbits, which fill the phase space~\cite{auerbach1987exploring}. Fortunately, for our system, of $n$ interacting Arnol'd cat maps, the full set of unstable periodic orbits is produced by all the {\em rational} points of $\mathbb{T}^{2n},$ taken as initial conditions. So the problem reduces to finding periodic orbits, with ``very large'' periods. This problem is difficult, because the periods are random functions of the--common--denominator, $N,$ of the rational points, that are initial conditions~\cite{axenides2023arnol}.    
 
In the following section we shall, thus, introduce the notion of mixing and  the mixing time for ergodic systems and  we shall present the method for obtaining a bound on the mixing time for the case of the single cat map; then we shall do the same calculation for two, symplectically coupled, cat maps, to see what are the issues that arise. Finally, we shall present the generalization to $n$ symplectically coupled maps, where, in addition to the dependence on the coupling, the issue of the range becomes of interest.

\section{The mixing time for one cat map}\label{ACM1mix}

Mixing is the ubiquitous phenomenon of  blending together distinct many body matter systems  from an initial inhomogeneous state to a final homogeneous uniform configuration ~\cite{cornfeld2012ergodic}. The mixing time is the characteristic transient period it takes for an initial local perturbation to delocalize and spread in a many-body system  attaining a uniform and homogenized  final state. 
What sets the scale of mixing time as well as its precise determination is both a conceptual and a computational challenge.

The development of mathematical methods associated with the Ergodic Theory have been brought into prominence  ~\cite{sinaui1976introduction,cornfeld2012ergodic} in the study of "Ergodic Mixing" through the study of stochasticity in measure-preserving dynamical systems. 

For a discrete-time  dynamical system, $ T : M\ra M $ ,  which preserves a probability measure $ \mu$  a "strong  mixing"' condition can be formulated  for two sets of points A and B on a constant energy surface $ E$ , as follows: 
\begin{equation}
\mu\left( A \cap T^{n}\left(B\right)\right)
\xrightarrow{n\longrightarrow\infty}
{\mu}\left( A \right) \mu\left( B\right) 
\end{equation}
 Intuitively the mixing condition states that a dynamical system is strongly mixing whenever any two observables $A$ and $B$  which occur at separate time instances, specified by the action of $T^{n}$ on $B,$
 become independent in the infinite time separation limit $ n \ra \infty $.

Equivalently its standard diagnostic quantifier  is the decay of the correlation function for any pair of observable functions $f , g  : M \rightarrow \mathbb{C}$  
\begin{equation}
C_{n} \left( f , g^\ast \right) \ = \ \int \ f \circ	T^{n} \cdot g^\ast d\mu \ - \ 
\left( \int f d\mu \right) \cdot \left( \int g^\ast d\mu \right)\hskip0.25truecm
\xrightarrow{n\longrightarrow\infty}0
\end{equation}
The correlation function $ C_{n} \left( f , g^\ast \right) $  or the self-correlation for a single function $ C_{n} \left( f,f^\ast\right) $ fall off to zero either polynomially, exponentially (rapid mixing) or even super exponentially in the long time limit $ n\ra \infty $~\cite{baladidecay2001,Pollicott2019}.

We shall show explicitly that the Arnol'd cat map on $\mathbb{T}^2$ is exponentially mixing and  that its mixing time is given by the expression
\begin{equation}
\label{mixingtimeACM}
\tau_\mathrm{mix}\mathrm{(Arnol'd)}=\frac{1}{\log\rho_+}
\end{equation}
where $\rho_+=(3+\sqrt{5})/2.$ This is, in fact equal to the inverse of the Kolmogorov--Sinai entropy of the system. This result is known and in the next section we shall generalize it to the case of symplectically coupled Arnol'd cat maps.

We start with some preparatory material: Any smooth and square integrable observable on $\mathbb{T}^2$ has a uniformly convergent Fourier series
\begin{equation}
\label{Fourierseriesf}
f(\bm{x})=\sum_{\bm{k}\in\mathbb{Z}\times\mathbb{Z}}\,c_{\bm{k}} e^{2\pi\mathrm{i}\bm{k}\cdot\bm{x}}
\end{equation}
The set of all these observables defines a Hilbert space, $H(\mathbb{T}^2),$ with inner product 
\begin{equation}
\label{innerpod}
\langle f,g\rangle=\int_{\mathbb{T}^2}\,d^2\bm{x}\,f(\bm{x})g^\ast(\bm{x})=\sum_{\bm{k}\in\mathbb{Z}\times\mathbb{Z}} c_{\bm{k}}\cdot d_{\bm{k}}^\ast
\end{equation}
where
\begin{equation}
\label{Fourierseriesg}
g(\bm{x})=\sum_{\bm{k}\in\mathbb{Z}\times\mathbb{Z}}\,d_{\bm{k}} e^{2\pi\mathrm{i}\bm{k}\cdot\bm{x}}
\end{equation}
and the norm is defined by 
\begin{equation}
\label{norm}
||f||^2=\int_{\mathbb{T}^2}\,d^2\bm{x}\,f(\bm{x})f^\ast(\bm{x})=\sum_{\bm{k}\in\mathbb{Z}\times\mathbb{Z}}\,|c_{\bm{k}}|^2
\end{equation}
For any $f,g\in H(\mathbb{T}^2)$ we have the Cauchy--Schwarz inequality
\begin{equation}
\label{CSineq}
\left|\langle f,g\rangle\right| \leq ||f|| ||g||
\end{equation}
With these standard preliminaries, we proceed with the evaluation of the correlation functions for the Arnol'd cat map. 
\begin{equation}
\label{corrfunsACM}
C_n(f,g^\ast)=\int\,d^2\bm{x}\,f(T^n\bm{x})g^\ast(\bm{x})-\int\,d^2\bm{x}\,f(\bm{x})\int\,d^2\bm{x}\,g^\ast(\bm{x})=\langle f\circ T^n,g^\ast\rangle-\langle f\rangle\langle g^\ast\rangle
\end{equation}

We follow the procedure sketched in ref.~\cite{de1995chaos}.  The idea is to choose as functions the eigenfunctions of a particular operator and show that, for these, $|C_n(f,g^\ast)|\leq\mathrm{const}\times e^{-n/\tau}.$ 

Let us recall the argument: The ACM has eigenvalues 
\begin{equation}
\label{eigenvalACM}
\rho_\pm=\frac{3\pm\sqrt{5}}{2}
\end{equation}
 and corresponding eigenvectors, 
 \begin{equation}
 \label{eigenfunACM}
 \bm{u}_\pm=\frac{1}{\sqrt{1+\rho_\pm}}\left(\begin{array}{c} 1 \\ \rho_\pm-1\end{array}\right)
 \end{equation}
 
 Therefore, $${\sf A}\bm{u}_\pm=\rho_\pm\bm{u}_\pm$$ 
 
 Since $\rho_+\rho_-=1$ and they're real (since the matrix is symmetric) $\rho_+>1$ and $\rho_-=1/\rho_+ < 1.$ 

Now, let us define the operators 
\begin{equation}
\label{Dpm}
D_\pm\equiv -\frac{\mathrm{i}}{2\pi}u_{\pm,I}\delta_{IJ}\partial_{x_J}
\end{equation}
Since the torus, $\mathbb{T}^2,$ is a compact manifold, the spectrum of the operators $D_\pm$ is discrete and can be labeled by two integers, $n_I.$ Indeed, it is straightforward to check that the functions 
\begin{equation}
\label{Dpmeigenfun}
e_{\bm{k}}(\bm{x})=e^{2\pi\mathrm{i}\bm{k}\cdot\bm{x}}
\end{equation}  
with $\bm{k}\in\mathbb{Z}\times\mathbb{Z}$ and $\bm{x}\in\mathbb{T}^2$ are eigenfunctions of $D_\pm:$
\begin{equation}
\label{Dpmeigenfun1}
D_\pm e_{\bm{k}}(\bm{x})=\bm{k}\cdot\bm{u}_\pm e_{\bm{k}}(\bm{x})
\end{equation}
Since the components of the $\bm{u}_\pm$ are irrational numbers, $\bm{k}\cdot\bm{u}_\pm\neq 0$ for any $\bm{k}\in\mathbb{Z}\times\mathbb{Z},$ $\bm{k}\neq(0,0),$  the inverses, $[D_\pm]^{-1},$ 
\begin{equation}
\label{Dpminv}
[D_\pm]^{-1} e_{\bm{k}}(\bm{x})=\frac{1}{\bm{k}\cdot\bm{u}_\pm}e_{\bm{k}}(\bm{x})
\end{equation} 
are well-defined.

We proceed below with the details of the evaluation of the correlation function $C_n(f,g^\ast)$ in a more explicit form. 

First we split the Fourier sums of the functions $f$ and $g$ as follows:
\begin{equation}
\label{CorrfunACMmodes}
\begin{array}{l}
\displaystyle
C_n(f,g^\ast)=\sum_{\bm{k},\bm{l}\in\mathbb{Z}\times\mathbb{Z}} c_{\bm{k}}d_{\bm{l}}^\ast\int_{\mathbb{T}^2}\,d^2\bm{x}\,e^{2\pi\mathrm{i}\bm{k}\cdot T^n\bm{x}} e^{-2\pi\mathrm{i}\bm{l}\cdot\bm{x}}-c_0d_0^\ast=\\
\displaystyle
\sum_{\bm{k}\neq(0,0)} c_{\bm{k}}d_0^\ast\int_{\mathbb{T^2}}\,d^2\bm{x}\,e^{2\pi\mathrm{i}\bm{k}\cdot T^n\bm{x}} +
\sum_{\bm{l}\neq(0,0)} c_0 d_{\bm{l}}^\ast\int_{\mathbb{T}^2}\,d^2\bm{x}\,e^{-2\pi\mathrm{i}\bm{l}\cdot\bm{x}} + 
\sum_{\bm{k}\neq(0,0),\bm{l}\neq(0,0)} c_{\bm{k}}d_{\bm{l}}^\ast\int_{\mathbb{T}^2}\,d^2\bm{x}\,e^{2\pi\mathrm{i}\bm{k}\cdot T^n\bm{x}}e^{-2\pi\mathrm{i}\bm{l}\cdot\bm{x}}
\end{array}
\end{equation}
The first two terms, that involve sums over $\bm{k}\neq(0,0)$ or $\bm{l}\neq (0,0)$ are zero (since the integrals are $\delta-$functions on $\bm{k}=(0,0)$ and $\bm{l}=(0,0)$ respectively), therefore only the last term survives:
\begin{equation}
\label{CorrfunACMmodes1}
C_n(f,g^\ast)=\sum_{\bm{k}\neq(0,0),\bm{l}\neq(0,0)} c_{\bm{k}}d_{\bm{l}}^\ast\int_{\mathbb{T}^2}\,d^2\bm{x}\,e^{2\pi\mathrm{i}\bm{k}\cdot T^n\bm{x}}e^{-2\pi\mathrm{i}\bm{l}\cdot\bm{x}}
\end{equation}
Therefore, the operators $D_\pm\equiv -(\mathrm{i}/(2\pi))\bm{u}_\pm\cdot\nabla_{\bm{x}}$ act on the function $f(\bm{x})=e^{2\pi\mathrm{i}\bm{k}\cdot({\sf A}^n\bm{x})}$
as
$$
D_\pm e^{2\pi\mathrm{i}\bm{k}\cdot({\sf A}^n\bm{x})}=\bm{u}_\pm\cdot\bm{k}\rho_\pm^n e^{2\pi\mathrm{i}\bm{k}\cdot({\sf A}^n\bm{x})}
$$
as can be checked by direct calculation. 

These calculations now lead to the following statement:
\begin{prop}
$$
\int_{\mathbb{T}^2}\,d^2\bm{x}\,D_\pm f({\sf A}^n\bm{x})\,D_\pm^{-1}\,g^\ast(\bm{x})=-\int_{\mathbb{T}^2}\,d^2\bm{x}\,f({\sf A}^n\bm{x})\,g^\ast(\bm{x})
$$
\label{prop1}
\end{prop}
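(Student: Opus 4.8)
The plan is to reduce the statement to the \emph{anti-symmetry} of the first-order operator $D_\pm=-\frac{\mathrm{i}}{2\pi}\,u_{\pm,I}\,\partial_{x_I}$ under the $L^2$ pairing \eqref{innerpod}. The key preliminary fact I would establish is that, for any two smooth functions $P,Q$ on the torus,
\begin{equation}
\int_{\mathbb{T}^2} d^2\bm{x}\,\bigl(D_\pm P\bigr)(\bm{x})\,Q(\bm{x}) = -\int_{\mathbb{T}^2} d^2\bm{x}\,P(\bm{x})\,\bigl(D_\pm Q\bigr)(\bm{x}).
\end{equation}
This is nothing but integration by parts: since $D_\pm$ is a constant-coefficient directional derivative along $\bm{u}_\pm$, moving the derivative from $P$ onto $Q$ produces the minus sign, and no boundary term survives because $\mathbb{T}^2$ is compact without boundary (equivalently, the periodicity of $P$ and $Q$ makes the total-derivative term integrate to zero).

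Next I would apply this identity with the specific choices $P=f({\sf A}^n\bm{x})$ and $Q=D_\pm^{-1}g^\ast(\bm{x})$. Both are legitimate smooth functions on $\mathbb{T}^2$: the first because $f$ is a smooth observable and ${\sf A}^n$ is a linear automorphism of the torus, and the second because, by \eqref{Dpminv}, $D_\pm^{-1}$ is well defined on every non-constant mode $\bm{k}\neq(0,0)$, the denominators $\bm{k}\cdot\bm{u}_\pm$ being nonzero thanks to the irrationality of the components of $\bm{u}_\pm$. Substituting into the anti-symmetry relation gives
\begin{equation}
\int_{\mathbb{T}^2} d^2\bm{x}\,D_\pm f({\sf A}^n\bm{x})\,D_\pm^{-1}g^\ast(\bm{x}) = -\int_{\mathbb{T}^2} d^2\bm{x}\,f({\sf A}^n\bm{x})\,D_\pm D_\pm^{-1}g^\ast(\bm{x}),
\end{equation}
and the statement follows at once from $D_\pm D_\pm^{-1}=\mathrm{Id}$.

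The one point demanding care, and what I expect to be the only genuine obstacle, is the bookkeeping of the constant Fourier mode. Because $D_\pm$ annihilates constants, its inverse lives on the orthogonal complement of the constants and the relation $D_\pm D_\pm^{-1}=\mathrm{Id}$ holds precisely on that subspace; one must therefore read $g^\ast$ as having its mean subtracted. This is harmless in the intended use, since \eqref{CorrfunACMmodes1} has already projected the zero modes of $f$ and $g$ out of $C_n(f,g^\ast)$. As an independent check that sidesteps integration by parts entirely, I would verify the identity mode by mode on the uniformly convergent Fourier series \eqref{Fourierseriesf}, \eqref{Fourierseriesg}: acting with $D_\pm$ on $e^{2\pi\mathrm{i}\bm{k}\cdot{\sf A}^n\bm{x}}$ pulls down a factor $\rho_\pm^n(\bm{u}_\pm\cdot\bm{k})$, while $D_\pm^{-1}$ on $e^{-2\pi\mathrm{i}\bm{l}\cdot\bm{x}}$ contributes $-1/(\bm{u}_\pm\cdot\bm{l})$; the orthogonality integral forces $\bm{l}={\sf A}^n\bm{k}$, so that $\bm{u}_\pm\cdot\bm{l}=\rho_\pm^n\,\bm{u}_\pm\cdot\bm{k}$, whereupon these two factors collapse to $-1$ and reproduce the right-hand side term by term.
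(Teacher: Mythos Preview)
Your argument is correct and is precisely the integration-by-parts approach the paper invokes; the paper gives no further detail beyond that phrase, so your write-up is in fact more complete than the original, including the careful handling of the zero mode and the independent Fourier-mode check.
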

which can be proved using  integration by parts.

Since any function on the torus can be expanded in plane waves, we deduce that the correlation function of any two functions on the torus, that are sufficiently smooth for their Fourier expansions (and those of their derivatives) to converge, will show mixing behavior with the same mixing time. 

Using Proposition~\ref{prop1}, this can be established as follows: 
\begin{equation}
\label{Cnfgastums}
\begin{array}{l}
\displaystyle
C_n(f,g^\ast)=\lambda_\pm^n\int_{\mathbb{T}^2}\,d^2\bm{x}\,\left(
\sum_{\bm{k}\neq(0,0)}\,c_{\bm{k}}\bm{k}\cdot\bm{u}_\pm e^{2\pi\mathrm{i}\bm{k}\cdot T^n\bm{x}}\right)
\left(
\sum_{\bm{l}\neq(0,0)}\,d_{\bm{k}}^\ast\frac{1}{\bm{l}\cdot\bm{u}_\pm}e^{-2\pi\mathrm{i}\bm{l}\cdot\bm{x}}
\right)
\end{array}
\end{equation}
We now apply the Cauchy-Schwarz inequality:
\begin{equation}
\label{CSineq}
|\langle f,g^\ast\rangle|^2\leq ||f||^2||g^\ast||^2
\end{equation}
to deduce that 
\begin{equation}
\label{CSinqmix}
|C_n(f,g^\ast)|^2\leq \rho_\pm^{2n}\int_{\mathbb{T}^2}\,d^2\bm{x}\,\left|\sum_{\bm{k}\neq(0,0)}\,c_{\bm{k}}\bm{k}\cdot\bm{u}_\pm e^{2\pi\mathrm{i}\bm{k}\cdot\bm{x}}\right|^2
\int_{\mathbb{T}^2}\,d^2\bm{x}\,\left|\sum_{\bm{l}\neq(0,0)}\,d_{\bm{k}}^\ast\frac{\bm{l}\cdot\bm{u}_\pm} e^{-2\pi\mathrm{i}\bm{l}\cdot\bm{x}}\right|^2
\end{equation}
The integrals do not depend on $n,$ so the best bound on how $|C_n(f,g^\ast)|$ vanishes, as $n\to\infty,$ is obtained in the form
\begin{equation}
\label{tmixACM}
|C_n(f,g^\ast)|\sim e^{-n\log\rho_+}=e^{-n\lambda_+}
\end{equation}
as $n\to\infty.$

\section{Multiple mixing for the single Arnol'd cat map}\label{multiplemix}

We now proceed to extend the notion of mixing to any number of observables in both the ergodic theory of measure preserving transformations as well as to their corresponding diagnostic quantifiers of decay of higher order correlation functions of observables. 
This was pioneered by Rokhlin~\cite{rokhlin1949endomorphisms} and has since been the focus of considerable activity (cf.~\cite{ryzhikov2024multiplemixing75years} for a recent review).

Let $T$ be the measure preserving transformation of a dynamical system $\Gamma$ on a phase space $\Sigma.$ 
The generalized 3-strong mixing condition for any three observables or measurable sets A, B , and C   take the form 
\begin{equation}
\label{3-mixing}
\lim_{m,n\ra \infty} \mu\left( A \cap T^{m} B \cap T^{ m + n } C \right) \, = \, \mu \left( A\right) \mu \left(B\right) \mu \left( C\right) 
\end{equation}

The strong mixing condition for  the more general case of $l-$fold mixing takes the form
\begin{equation}
\label{k-mixing}
\lim_{\alpha_{1}, \ldots ,\alpha_{k}\ra\infty} \mu \left( A_{1}\cap T^{\alpha_{1}} A_{2} \ldots T^{\sum_{i=1}^{k-1} \alpha_{i}}
 A_{i} \right) \, = \, \prod_{i=1}^{k} \mu\left ( A_{i} \right) 
\end{equation} 

In the equivalent language of correlations functions we define the $l-$fold mixing correlation function of $l+1$ observables, $f_i(\bm{x}), i=1,2,\ldots,l+1,$ $l=1,2,\ldots,$ $\bm{x}\in\Sigma$ as follows: 
\begin{equation}
\label{Cmultiple}
\begin{array}{l}
\displaystyle
C_{n_1,n_2,\ldots,n_l}(f_1,\ldots,f_{l+1})=\int_\Sigma\,d\mu(\bm{x})f_1(\bm{x})f_2(T^{n_1}\bm{x})f_3(T^{n_1+n_2}\bm{x})\cdots f_{l+1}(T^{n_1+n_2+\cdots+n_l}\bm{x})-\\
\displaystyle
\hskip5truecm
\prod_{i=1}^{l+1}\,\int_\Sigma\,d\mu(\bm{x})\,f_i(\bm{x})
\end{array}
\end{equation}
where $n_1,n_2,\ldots,n_l=1,2,\ldots$

We say that the dynamical system $(\Gamma, T,\Sigma)$ exhibits $l-$fold  mixing iff
\begin{equation}
\label{lfoldmixing}
C_{n_1,n_2,\ldots,n_l}(f_1,\ldots,f_{l+1})\to 0
\end{equation}
as $(n_1,n_2,\ldots,n_l)\to\infty,$ for observables $\left\{f_i(\bm{x})\right\}_{i=1,\ldots,l+1}$ which are smooth enough and square integrable, as well as all their derivatives.

We are interested in the case when $T$ is the Arnol'd cat map and $\Sigma=\mathbb{T}^2,$ the two-dimensional torus with radii equal to 1 and measure $d\mu(\bm{x})=d^2\bm{x}.$ In this section we shall show that the ACM exhibits $l-$fold mixing for every integer $l=1,2,\ldots$ and compute the corresponding mixing times, generalizing the calculations of the previous section. 

To this end, we expand the $f_i(\bm{x})$ in Fourier series,
\begin{equation}
\label{fiFourier}
f_i(\bm{x})=\sum_{\bm{k}\in\mathbb{Z}\times\mathbb{Z}} c_{\bm{k}}^{(i)} e^{2\pi\mathrm{i}\bm{k}\cdot\bm{x}}
\end{equation}
and extract the constant part:
\begin{equation}
\label{constfipart}
f_i(\bm{x})=c_{\bm{0}}^{(i)}+\widetilde{f}_i(\bm{x})
\end{equation}
where 
\begin{equation}
\label{c0i}
c_{\bm{0}}^{(i)}=\int_{\mathbb{T}^2}\,d^2\bm{x}\,f_i(\bm{x})
\end{equation}
and
\begin{equation}
\label{widetildefi}
\widetilde{f}_i(\bm{x})=\sum_{\bm{k}\neq(0,0)} c_{\bm{k}}^{(i)} e^{2\pi\mathrm{i}\bm{k}\cdot\bm{x}}
\end{equation} 
We apply this decomposition only to $f_1(\bm{x})$ and obtain the following expression for 
$C_{n_1,n_2,\ldots,n_l}(f_1,\ldots,f_{l+1}):$
\begin{equation}
\label{Cmultiple1}
\begin{array}{l}
\displaystyle
C_{n_1,n_2,\ldots,n_l}(f_1,\ldots,f_{l+1})=c_{\bm{0}}^{(1)}\int_\Sigma\,d^2\bm{x}\,f_2(T^{n_1}\bm{x})\cdots f_{l+1}(T^{n_1+\cdots+n_l}\bm{x}) +\\
\displaystyle
\hskip2truecm
\int\,d^2\bm{x}\,\widetilde{f}_1(\bm{x})f_2(T^{n_1}\bm{x})\cdots f_{l+1}(T^{n_1+\cdots+n_l}\bm{x}) - \prod_{i=1}^{l+1}\,c_{\bm{0}}^{(i)}
\end{array}
\end{equation}
The first integral in this expression can be rewritten, upon performing the change of variables, $\bm{x}\to\bm{y}=T^{n_1}\bm{x},$ that leaves the measure invariant, as
\begin{equation}
\label{Cmultiple2}
\begin{array}{l}
\displaystyle
\int_\Sigma\,d^2\bm{x}\,f_2(T^{n_1}\bm{x})\cdots f_{l+1}(T^{n_1+\cdots+n_l}\bm{x})=\int_\Sigma\,d^2\bm{x}\,f_2(\bm{x})\cdots f_{l+1}(T^{n_2+\cdots+n_l}\bm{x})=\\
\displaystyle
\hskip7truecm
C_{n_2,n_3,\ldots,n_l}(f_2,f_3,\ldots,f_{l+1}) + \prod_{i=2}^{l+1}c_{\bm{0}}^{(i)}
\end{array}
\end{equation}
whence we deduce the ``recurrence relation''
\begin{equation}
\label{Cmultiple3}
C_{n_1,\ldots,n_l}(f_1,f_2,\ldots,f_{l+1})=c_{\bm{0}}^{(1)}C_{n_2,n_3,\ldots,n_l}(f_2,f_3,\ldots,f_{l+1})+\int\,d^2\bm{x}\,\widetilde{f}_1(\bm{x})f_2(T^{n_1}\bm{x})
\end{equation}
What is noteworthy is that the product of the constant terms has been eliminated. 

If we now assume that $C_{n_2,\ldots,n_l}(f_2,\ldots, f_{l+1})$ tends to 0 as $n_2,\ldots,n_l\to\infty,$ it remains to prove that the last term in eq.~(\ref{Cmultiple3}), also, tends to 0, as $n_1,n_2,\ldots,n_l\to\infty.$

To this end we shall employ the method of the previous section, making use of the self-adjoint differential operator,
\begin{equation}
\label{Dops}
D_{\bm{u}}=\frac{1}{2\pi\mathrm{i}}\bm{u}\cdot\bm{\partial}
\end{equation} 
where $\bm{\partial}=(\partial_{x_1},\partial_{x_2})$  and $\bm{u}=(u_1,u_2)$ is the eigenvector of the ACM that corresponds to the largest eigenvalue of the ACM. The inverse differential operator, $[D_{\bm{u}}]^{-1}$ acts on the complex functions of the torus that don't have a constant term. This is the reason it's useful to extract it by writing the function $f_1(\bm{x})$ as $f_1(\bm{x})=c_{\bm{0}}^{(1)}+\widetilde{f}_1(\bm{x}).$ 

We notice, now,  that
\begin{equation}
\label{property}
I_{n_1,\ldots,n_l}\equiv\int_{\mathbb{T}^2}\,d^2\bm{x}\,\widetilde{f}_1(\bm{x})f_2(T^{n_1}\bm{x})\cdots f_{l+1}(T^{n_1+\ldots+n_l}\bm{x})=
-\int_{\mathbb{T}^2}\,d^2\bm{x}\,\left([D_{\bm{u}}]^{-1}\widetilde{f}_1(\bm{x})\right)D_{\bm{u}}\left( f_2\cdots f_{l+1}\right)
\end{equation} 
where 
\begin{equation}
\label{Duf}
D_{\bm{u}}f_i(\bm{x})=\sum_{\bm{k}\in\mathbb{Z}\times\mathbb{Z}}\,c_{\bm{k}}^{(i)}(\bm{k}\cdot\bm{u}) e^{2\pi\mathrm{i}\bm{k}\cdot\bm{x}}
\end{equation}
for $i=2,\ldots,l+1$ and
\begin{equation}
\label{Duinvf}
D_{\bm{u}}^{-1}\widetilde{f}_1(\bm{x})=\sum_{\bm{k}\in\mathbb{Z}\times\mathbb{Z}}\,c_{\bm{k}}^{(1)}\frac{1}{\bm{k}\cdot\bm{u}} e^{2\pi\mathrm{i}\bm{k}\cdot T^n\bm{x}}
\end{equation}
as well as 
\begin{equation}
\label{DufoTn}
D_{\bm{u}}f_i(T^n\bm{x})=\lambda^n\sum_{\bm{k}\in\mathbb{Z}\times\mathbb{Z}}\,c_{\bm{k}}^{(i)}(\bm{k}\cdot\bm{u}) e^{2\pi\mathrm{i}\bm{k}\cdot\bm{x}}
\end{equation}
We can now evaluate the action of $D_{\bm{u}}$ on the product $f_2\cdots f_{l+1};$ we find 
\begin{equation}
\label{Duprodf}
\begin{array}{l}
\displaystyle
D_{\bm{u}}\left[
f_2(T^{n_1+n_2}\bm{x})f_3(T^{n_1+n_2+n_3}\bm{x})\cdots f_{l+1}(T^{n_1+n_2+\cdots+n_{l+1}}\bm{x})
\right]=\\
\displaystyle
\sum_{m=2}^{l+1}\,F_{l,m}(\bm{x})D_{\bm{u}}f_m(T^{n_1+n_2+\cdots+n_m}\bm{x})
\end{array}
\end{equation}
where 
\begin{equation}
\label{Flm}
F_{l,m}(\bm{x})=\prod_{i=2,i\neq m}^{l+1}\,f_i(T^{n_1+n_2+\cdots+n_i}\bm{x})
\end{equation}
These imply that eq.~(\ref{Duprodf}) can be written as
\begin{equation}
\label{Duprodf1}
\begin{array}{l}
\displaystyle
D_{\bm{u}}\left[
f_2(T^{n_1+n_2}\bm{x})f_3(T^{n_1+n_2+n_3}\bm{x})\cdots f_{l+1}(T^{n_1+n_2+\cdots+n_{l+1}}\bm{x})
\right]=\\
\displaystyle
\sum_{m=2}^{l+1}\,F_{lm}(\bm{x}) \rho^{n_1+n_2+\cdots+n_m}f_m'(T^{n_1+n_2+\cdots+n_m}\bm{x})
\end{array}
\end{equation}
In order to obtain the desired result, that the integral in eq.~(\ref{property}) does vanish--and, what's much more interesting, how does it vanish--in the limit $n_2,n_3,\ldots,n_l\to\infty-$we apply the triangle and the Cauchy--Schwarz inequalities repeatedly:
\begin{equation}
\label{CStrinproperty}
\begin{array}{l}
\displaystyle
| I_{n_1,n_2,\ldots,n_l}|^2\leq || D_{\bm{u}}^{-1}\widetilde{f}_1||^2 || D_{\bm{u}}[f_2\cdots f_{l+1}] ||^2\leq || D_{\bm{u}}^{-1}\widetilde{f}_1||^2
\sum_{m=2}^{l+1}\,\rho^{2(n_1+\cdots+n_m)}|| F_{lm}(\bm{x})||^2 || f_m'||^2
\end{array}
\end{equation}
The norm of $F_{lm}$ can be bounded as
\begin{equation}
\label{CStrinproperty}
\begin{array}{l}
\displaystyle
|| F_{lm}(\bm{x})||^2\leq \prod_{i=2,i\neq m}^{l+1}\,|| f\circ T^{n_1+n_2+\cdots+n_{i-1}}\bm{x}||^2
\end{array}
\end{equation}
We notice that due to the measure preserving properties of $T$ the various factors do not depend on $n_1,n_2,\ldots,n_l:$
\begin{equation}
\label{masurepreservT}
|| f\circ T^{n_1+n_2+\cdots+n_{i-1}}\bm{x}||^2=\int\,d^2\bm{x}\,| f_i(T^{n_1+n_2+\cdots+n_{i-1}}\bm{x})|^2=\int\,d^2\bm{x}\,|f_i(\bm{x})|^2
\end{equation}
The same occurs for the $||f_m'||^2.$

Collecting together all terms we obtain the effective bound using the  eigenvector $\bm{u}_-,$ corresponding to the eigenvalue $\rho_-< 1,$ therefore
\begin{equation}
\label{bound}
| I_{n_1,\ldots, n_l}|^2 \leq \sum_{m=1}^{l+1} d_m\,\rho_-^{2(n_1+n_2+\cdots+n_m)}
\end{equation}
where the coefficient $d_m$ contains all numerical factors. 

To leading order, therefore, we find that 
\begin{equation}
\label{bound1}
| I_{n_1,\ldots, n_l}|^2\to d_1\rho_-^{2n_1}=d_1 e^{-2n_1\log\lambda_+}
\end{equation}
thereby completing the induction hypothesis.

That this term, indeed, vanishes, in the long time limit, guarantees the ``sufficient'' part of Rokhlin's conjecture, that 2-mixing induces $l-$fold mixing for any $l>2.$ 
  
\section{The mixing time for $n$ symplectically coupled  Arnol'd cat maps}\label{CACM}
It is, now, interesting to examine the mixing properties of the system of $n>1$ coupled Arnol'd cat maps. It is known that hyperbolic (Anosov) linear maps on compact phase spaces exhibit strong, as well as $l-$fold (for all $l=1,2,\ldots$), mixing. In this section, therefore, we shall focus on the effects the coupling  has on the mixing properties of such maps, generalizing the calculation for the single map, that was studied in section~\ref{ACM1mix} (the case of $l-$fold mixing is a straightforward generalization). 

The idea is to use the differential operators,
\begin{equation}
\label{diffopsDu}
D_{\bm{u}}^i=\frac{1}{2\pi\mathrm{i}}\bm{u}^i\cdot\bm{\partial}=\frac{1}{2\pi\mathrm{i}}u_a^i\partial_a
\end{equation}
that are the generalization for $n$ degrees of freedom of the differential operators~(\ref{Dops}) that were used for computing the mixing time of one cat map. Here 
$\partial_a=\partial/\partial\bm{x}_a$ and $a=1,2,\ldots,2n$ labels the point on the $2n-$dimensional torus. $\bm{u}^i$ is an eigenvector of the evolution operator
\begin{equation}
\label{MevolnACM}
{\sf M}=\left(\begin{array}{cc} I & {\sf C}\\ {\sf C} & I + {\sf C}^2\end{array}\right)
\end{equation}
where the symmetric, $n\times n$ integer matrix ${\sf C}$ is parametrized as
\begin{equation}
\label{Cmatrix}
{\sf C}=K+G({\sf P}+{\sf P}^\mathrm{T})
\end{equation}
for the case of the closed chain of $n$ maps and $K$ and $G$ are diagonal matrices. 

As for the single map, we are interested in bounding the correlation function
\begin{equation}
\label{corrfunnmaps}
C_r(f,g^\ast)=\int_{\mathbb{T}^{2n}}\,d^{2n}\bm{x}\,f({\sf M}^r\bm{x})g^\ast(\bm{x})-\int_{\mathbb{T}^{2n}}\,d^{2n}\bm{x}\,f(\bm{x})\int\,d^{2n}\,g^\ast(\bm{x})
\end{equation}
for $f,g\in L^2(\mathbb{T}^{2n}),$ which have continuous {\em and} square integrable partial derivatives.

 We proceed as follows: We observe that to get the strong mixing property as well as the rate of convergence (the mixing time), it is enough to use a straightforward generalization of the method used in for the single map, {\em viz.} we write 
 \begin{equation}
 \label{constantfgsep}
 \begin{array}{l}
 \displaystyle
 f(\bm{x})=\sum_{\bm{k}\in\mathbb{Z}^{2n}}\,c_{\bm{k}}^f\,e^{2\pi\mathrm{i}\bm{k}\cdot\bm{x}}\equiv c_{\bm{0}}^f+\widetilde{f}(\bm{x})\\
 \displaystyle
 g(\bm{x})=\sum_{\bm{k}\in\mathbb{Z}^{2n}}\,c_{\bm{k}}^g\,e^{2\pi\mathrm{i}\bm{k}\cdot\bm{x}}\equiv c_{\bm{0}}^g+\widetilde{g}(\bm{x})\\
 \end{array}
\end{equation}  
whence we, immediately, find that 
\begin{equation}
\label{Crfg}
\begin{array}{l}
\displaystyle 
C_r(f,g^\ast)= {c_0^{(g)}}^\ast\int\,d^{2n}\bm{x} f({\sf M}^r\bm{x}) + \int\,d^{2n}\bm{x}{\widetilde{g}(\bm{x})}^\ast f({\sf M}^r\bm{x})-c_0^{(f)}{c_0^{(g)}}^\ast=\\
\displaystyle
{c_0^{(g)}}^\ast\int\,d^{2n}\bm{x}\,\widetilde{f}({\sf M}^r\bm{x}) + \int\,d^{2n}\,\bm{x}\,{\widetilde{g}}^\ast(\bm{x})f({\sf M}^r\bm{x}) = \int\,d^{2n}\bm{x}\,{\widetilde{g}}^\ast(\bm{x})f({\sf M}^r\bm{x})
\end{array}
\end{equation}
Now we  carry out the same procedure as for the case of one map, by acting now with the   product of all the $D_u^{(i)},$ $i=1,\ldots,n.$ Upon integrating by parts as for the case of one map, we find, this time, that
\begin{equation}
\label{Crfg1}
\begin{array}{l}
\displaystyle
C_r(f,g^\ast)=(-)^n\int\,d^{2n}\bm{x}\,[D_{u_-}^{(1)}]^{-1}\cdots [D_{u_-}^{(n)}]^{-1}{\widetilde{g}(\bm{x})}^\ast
D_{u_-}^{(1)}\cdots D_{u_-}^{(n)}f({\sf M}^r\bm{x})=\\
\displaystyle
(-)^n\int\,d^{2n}\bm{x}\,\sum_{\bm{k}\neq\bm{0}}\,\frac{1}{\displaystyle\prod_{i=1}^n\,\bm{u}_-^{(i)}\cdot\bm{k} }{c_{\bm{k}}^{g}}^\ast e^{-2\pi\mathrm{i}\bm{k}\cdot\bm{x}}
\displaystyle\prod_{i=1}^n\,\lambda_-^{(i)}\sum_{\bm{l}\in\mathbb{Z}^{2n}}\,c_{\bm{l}}^{f} \bm{u}_-^{(i)}\cdot\bm{l}\,e^{2\pi\mathrm{i}\bm{l}\cdot  {\sf M}^r\bm{x}}
\end{array}
\end{equation}
Using the Cauchy-Schwarz inequality, we obtain the bound 
\begin{equation}
\label{Crfgbound}
\left|C_r(f,g^\ast)\right|\leq \left(\prod_{i=1}^n\,\rho_-^{(i)}\right)^r||\widetilde{\widetilde{g}}|| ||\widetilde{\widetilde{f}}||
\end{equation} 
where
\begin{equation}
\label{doubletildes}
\begin{array}{l}
\displaystyle
\widetilde{\widetilde{g}}=\sum_{\bm{k}\neq\bm{0}}\,\frac{1}{\displaystyle \prod_{i=1}^n\,\bm{u}_-^{(i)}\cdot\bm{k}}\,c_{\bm{k}}^g\,e^{2\pi\mathrm{i}\bm{k}\cdot\bm{x}}\\
\displaystyle
\widetilde{\widetilde{f}}=\sum_{\bm{l}\neq\bm{0}}\,\left(\prod_{i=1}^n\,\bm{u}_-^{(i)}\cdot\bm{l}\right)\,c_{\bm{l}}^f\,e^{2\pi\mathrm{i}\bm{l}\cdot\bm{x}}
\end{array}
\end{equation}
Finally, we observe that the product over the eigenvalues can be written as 
\begin{equation}
\label{prod2KS}
\left(\prod_{i=1}^n\,\rho_-^{(i)}\right)^r=e^{-r\sum_{i=1}^n\,\log\,\rho_+^{(i)}}=e^{-r  S_{\mathrm{K-S}} }
\end{equation}
where $S_{\mathrm{K-S}}$ is the Kolmogorov--Sinai entropy of the system of $n$ coupled maps. 

Therefore, we conclude that 
\begin{equation}
\label{Crfgasympt}
|C_r(f,g^\ast)|\sim e^{-r S_{\mathrm{K-S}}},
\end{equation}
as $r\to\infty,$ and this allows us to identify the mixing time of the system of $n$ coupled ACM maps with $1/S_{\mathrm{K-S}}:$
\begin{equation}
\label{mixingtimeACML}
\tau_\mathrm{mixing}(\mathrm{ACM\,lattice})=\frac{1}{S_{\mathrm{K-S}}}
\end{equation}
The steps of this 	computation can be straightfowardly generalized to the case of so--called ``$l-$fold mixing'', as was done for the single cat map in the previous section.

These bounds  hold even for the case of observables with square integrable first order partial derivatives. In the case that higher order derivatives are square integrable, we can use correspondingly higher powers of the differential operators $D_{u_-}^{(i)}$ and obtain, correspondingly, better bounds.

\section{Conclusions and outlook}\label{concl}
In this work we have studied,  in detail, the large time asymptotic behavior of the correlation functions of observables, that describe the mixing properties of a special class of automorphisms, those of  coupled Arnol'd cat maps, which are both symplectic and hyperbolic, on toroidal phase spaces and have computed the mixing time, in closed form. 
We have shown that the mixing time is given by  $1/S_\mathrm{K-S},$ where $S_\mathrm{K-S}$ is the Kolmogorov-Sinai entropy. 

In the literature numerical studies of nonlinear systems with many degrees of freedom have shown similar dependence of the mixing time on the Kolmogorov-Sinai entropy; in our case, however, what is of interest is that  we have an analytic result, which can be used to study the problem of relaxation of chaotic systems, subject to localized initial perturbations. 
There aren't many examples of field theories, with tuneable non-locality, for which the relaxation dynamics can be analytically controlled and can capture the salient properties of holographic systems; our work adds new classes to this list.

At the classical level it seems that there isn't any bound on how small the mixing time of the system can be; on the other hand, at the quantum level, studies of many-body quantum systems and the evolution of localized perturbations of black hole horizons seem to indicate that such a bound does, in fact, exist--the so-called ``scrambling time bound''~\cite{Sekino:2008he}--and is given by log $S_\mathrm{BH}$, where $S_\mathrm{BH}$ is the black hole entropy. 

The study of whether this bound is satisfied--or not--by the quantization of our system will be reported in future work. 

Our calculation  is, also,  relevant for the so-called Rokhlin conjecture, for which we clarify the conditions under which it may hold. 

Our results allow us to focus on the conditions for physical systems, that can minimize the mixing time, in the classical limit and set the stage for addressing how the mixing time of classical many-body systems is related to the scrambling time of the corresponding quantum systems. This is of relevance for understanding, on the one hand, transport properties of novel quantum materials, as well as the properties of quantum black holes~\cite{Hayden:2007cs,Sekino:2008he,maldacena2016bound}. To this end it is necessary to construct the corresponding unitary evolution operators for coupled Arnol'd cat maps, going beyond our previous work~\cite{Axenides:2013iwa}, for the case of the single Arnol'd cat map.

{\bf Acknowledgements:} This research was partially supported by the CNRS  International Emerging Actions program, “Chaotic behavior of closed quantum systems” under contract 318687.

 \bibliographystyle{utphys}
\bibliography{MAads2discrete}
\end{document}